\documentclass[preprint]{elsarticle} 

\usepackage{epstopdf}
\usepackage[caption=false]{subfig}

\usepackage{setspace}

\usepackage[T1]{fontenc}        
\usepackage{geometry}
\usepackage[utf8x]{inputenc}    
\usepackage{ucs}                
\usepackage{amsmath}            
\usepackage{amsfonts}
\usepackage{amssymb}
\usepackage{amsthm} 
\usepackage{thmtools}
\usepackage{ifthen}
\usepackage{color}
\usepackage{graphicx} 
\usepackage{textcomp}            
\usepackage{palatino}            
\usepackage{eulervm}            
\linespread{1.05} 
\usepackage{varioref}            
\usepackage{numcompress}\bibliographystyle{model4-names}\biboptions{authoryear}
\usepackage{cleveref}             

\usepackage{float}
\usepackage{multirow}
\usepackage{rotating}
\usepackage{subfig}
\usepackage{xr} 
\usepackage{setspace} 
\usepackage{tabularx}
\usepackage{booktabs}
\usepackage{hhline}

\newcommand{\R}{{\mathbb{R}}}         
\newcommand{\E}{{\mathbb{E}}}

\newcommand{\bay}{\begin{array}}
\newcommand{\eay}{\end{array}}

\newcommand{\bqa}{\begin{eqnarray*}}
\newcommand{\eqa}{\end{eqnarray*}}

\newcommand{\bee}{\begin{eqnarray*}}
\newcommand{\eee}{\end{eqnarray*}}

\newcommand{\bea}{\begin{eqnarray*}}
\newcommand{\eea}{\end{eqnarray*}}

\newcommand{\bqan}{\begin{eqnarray}}
\newcommand{\eqan}{\end{eqnarray}}

\newcommand{\be}{\begin{eqnarray}}
\newcommand{\ee}{\end{eqnarray}}

\newcommand{\bit}{\begin{itemize}}
\newcommand{\eit}{\end{itemize}}

\newcommand{\ben}{\begin{enumerate}}
\newcommand{\een}{\end{enumerate}}

\newcommand{\beq}{\begin{equation}}
\newcommand{\eeq}{\end{equation}}

\newcommand{\bdes}{\begin{description}}
\newcommand{\edes}{\end{description}}

\newcommand{\btb}{\begin{tabular}}
\newcommand{\etb}{\end{tabular}}

\newcommand{\bcen}{\begin{center}}
\newcommand{\ecen}{\end{center}}

\newcommand{\bmp}{\begin{minipage}}
\newcommand{\emp}{\end{minipage}}

\newcommand{\Cov}{\operatorname{{\it Cov}}}

\newcommand{\Var}{\operatorname{{\it Var}}}

\newcommand{\tr}{\operatorname{tr}}

\newcommand{\im}{\operatorname{\it Im}}

\newcommand{\rank}{\operatorname{\it rank}}



\newcommand{\vh}{\boldsymbol{h}}

\newcommand{\vr}{\boldsymbol{r}}

\newcommand{\vx}{\boldsymbol{x}}
\newcommand{\vy}{\boldsymbol{y}}

\newcommand{\vA}{\boldsymbol{A}}

\newcommand{\vH}{\boldsymbol{H}}
\newcommand{\vI}{\boldsymbol{I}}
\newcommand{\vJ}{\boldsymbol{J}}

\newcommand{\vL}{\boldsymbol{L}}

\newcommand{\vP}{\boldsymbol{P}}
\newcommand{\vQ}{\boldsymbol{Q}}

\newcommand{\vT}{\boldsymbol{T}}

\newcommand{\vV}{\boldsymbol{V}}

\newcommand{\vX}{\boldsymbol{X}}

\newcommand{\vbeta}{\boldsymbol{\beta}}

\newcommand{\veta}{\boldsymbol{\eta}}

\newcommand{\vmu}{\boldsymbol{\mu}}

\newcommand{\vSigma}{\boldsymbol{\Sigma}}

\newcommand{\vtheta}{\boldsymbol{\theta}}

\newcommand{\veins}{{\bf 1}}
\newcommand{\vnull}{{\bf 0}}










\newcommand{\Cline}[2]{
  \noalign{\global\setlength{\arrayrulewidth}{#1}}\cline{#2}%
  \noalign{\global\setlength{\arrayrulewidth}{0.4pt}  }}




\DeclareMathOperator{\vech}{vech}

\newtheoremstyle{Test1}
  {2 \baselineskip}
  {1.5 \baselineskip}
  {\itshape}
  {-0.0ex}
  {\fontfamily{ppl}\fontseries{l}\fontshape{n}}
  {:}
  {\newline}
   {}

\theoremstyle{Test1}
\newtheorem{Sa}{Theorem}[section]
\newtheorem{theorem}{Theorem}[section]
\newtheorem{re}[Sa]{Remark}
 
\newtheorem{Le}[Sa]{Lemma}

%
%

\newcolumntype{x}[1]{!{\centering\arraybackslash\vrule width #1}}

\geometry{a4paper,left=20mm,right=20mm, top=15mm, bottom=25mm}

\makeatletter
\renewenvironment{proof}[1][\proofname]{\par
  \pushQED{\qed}%
  \fontfamily{ppl}\fontseries{m}\fontshape{it} \topsep6\p@\@plus6\p@\relax
  \trivlist
  \item[\hskip\labelsep
        \bfseries
    #1\@addpunct{:}]\ignorespaces
}{%
  \popQED\endtrivlist\@endpefalse
}
\makeatother

\setlength{\parindent}{0pt}

\begin{document}
\title{Choice of the hypothesis matrix for using the Anova-type-statistic}
\author[1]{Paavo Sattler\corref{cor1}%
}
\ead{paavo.sattler@tu-dortmund.de}
\affiliation[1]{Institute for Mathematical Statistics and Industrial Applications, Faculty of Statistics, Technical
University of Dortmund, Joseph-von-Fraunhofer-Strasse 2-4, 44221 Dortmund, Germany}
\author[2]{Manuel Rosenbaum}
\affiliation[2]{Institute of Statistics, Ulm University, Helmholtzstr. 20, 89081 Ulm, Germany.}
\cortext[cor1]{Corresponding author}
\begin{abstract}
Initially developed in \cite{brDeMu:1997}, the Anova-type-statistic (ATS) is one of the most used quadratic forms for testing multivariate hypotheses for a variety of different parameter vectors $\vtheta\in\R^d$. 
Such tests can be based on several versions of ATS, and in most settings, they are preferable over those based on other quadratic forms, for example, the Wald-type-statistic (WTS). However, the same null hypothesis $\vH\vtheta=\vy$ can be expressed by a multitude of hypothesis matrices $\vH\in\R^{m\times d}$ and corresponding vectors $\vy\in\R^m$, which leads to different values of the test statistic, as it can be seen in simple examples. Since this can entail differing test decisions, it remains to investigate under which conditions certain tests using different hypothesis matrices coincide.
Here, the dimensions of the different hypothesis matrices can be substantially different, which has exceptional potential to save computation effort.

In this manuscript, we show that for the Anova-type-statistic and some versions thereof, it is possible for each hypothesis $\vH\vtheta=\vy$ to construct a companion matrix $\vL$ with a minimal number of rows, which not only tests the same hypothesis but also always yields the same test decisions. This can substantially reduce computation time, which is investigated in several conducted simulations. 

\end{abstract}

\maketitle
Formulating the appropriate hypothesis is a central part of statistical inference. However, right now, there is hardly any material on how to choose a suitable hypothesis matrix to check hypotheses regarding an unknown $d$-dimensional parameter. To formulate the hypothesis of interest, a so-called hypothesis matrix $\vH\in \R^{m\times d}$ is used together with an associated vector $\vy\in \R^m$, where $m\leq d$, which yields $\mathcal{H}_0: \vH\vtheta=\vy$.
Thereby, the parameter vector $\vtheta$ can here be any useful quantity with at least two components, which can also result from a multiple group comparison of common univariate quantities. Examples would be the expectation vector $\vmu$, in one group or more group designs as in \cite{pEB} or \cite{sattler2018}, but also vectorized correlation matrices $\vr$ (see, e.g. \cite{sattler2023correlation}) or coefficient vectors of regression models (as in\cite{Fahrmeir2013RegressionMM}) are feasible. In survival analysis, also the restricted mean survival time is a viable parameter \cite{munko2024RMST}. Even for functional data (like in \cite{munko2023functional}), the vector $\veta$ of mean functions can be investigated, demonstrating the versatility of this model for hypotheses. Unfortunately, the hypothesis matrix $\vH$ together with the corresponding vector $\vy$ is not unique, as is evident, since for each nonsingular matrix $\vQ$ we know $\vH\vtheta=\vy \Leftrightarrow \vQ\vH\vtheta=\vQ\vy$. 
But not even the dimension $m$ is unique, as it can be demonstrated with
\begin{equation}
 \vH_1=  \begin{pmatrix}
1 &-1 &0\\
0 & 1 &-1\\
1 & 0 &-1
\end{pmatrix},
 \quad 
\quad\vH_2=\begin{pmatrix}
1 &-1 &0\\
0 & 1 &-1\\
\end{pmatrix}\label{examples}\end{equation}
which both allow to check the equality of the components of a three-dimensional vector through $\vH_1\vtheta=\vnull_3$ resp. $\vH_2\vtheta=\vnull_2$. This fact indicates the potential for substantial reductions of computation time and complexity of the related tests.

Despite the common and broad application of test statistics based on quadratic forms, only for multiple contrast tests clear standards for the choice of the hypothesis matrix has been provided (see, e.g. \cite{rubarth2022} and \cite{pöhlmann2024}).
In \cite{sattler2023hypothesis}, this issue was paid attention to for the first time, where the focus was on a special quadratic form, the Wald-type-statistic (WTS). Therein, it was shown that each representation of the null hypothesis yields the same value of the test statistic, which allows the use of matrices that are in some sense preferable, for example, computationally or regarding their interpretability. Although the WTS is a very useful test statistic, especially for permutation approaches, it comes with restrictions on the covariance matrix of the underlying data, and moreover, is known to be very liberal for small and medium sample sizes. Therefore, we will focus on another, even wider used quadratic form, the so-called Anova-type-statistic (ATS) and some versions thereof. While in \cite{sattler2023hypothesis} the ATS was only dealt with marginally, we will now investigate under which circumstances multiple ATS coincide for different hypotheses matrices and thereby derive recommendations for an appropriate choice of those matrices.

\section{Main results}
The ATS is a comparably accessible and broadly used quadratic form. Basically for a given matrix $\vH \in \R^{m \times d}$ and a vector $\vy \in \R^m$, it is the function mapping a vector $\vx \in \R^d$ to $\R$ through
\[ATS(\vx,\vH,\vy)= (\vH\vx-\vy)^\top (\vH\vx-\vy).\]
Based on the ATS, it is straightforward to construct a test for the multivariate hypothesis $\mathcal{H}_0:\vH\vtheta = \vy$, as shortly sketched in the following. Given a data set $\vX$, consisting of one or multiple groups, let $\vT(\vX)$ be based on an appropriate estimator for the parameter vector of interest $\vtheta$. For example, in case of the expectation vector $\vtheta=\vmu$, a common test statistic is given through $\vT(\vX)=\sqrt{N}\ \overline \vX$, where $N$ is the sample size. Now, applying the ATS for the test statistic $\vT(\vX)$ usually leads to meaningful tests. 
Also, the standardized version of the ATS, which is known for its robustness against change of unit, is frequently considered and given by
\[ATS_s(\vx,\vH,\vy)=ATS(\vx,\vH,\vy)/\tr(\vH\vSigma\vH^\top),\]
with $\vSigma=\Cov(\vT(\vX))\geq 0$. There exist many other versions beyond, like
\[ATS_F(\vx,\vH,\vy)=ATS_s(\vx,\vH,\vy)\cdot \frac{[\tr(\vH\vSigma\vH^\top)]^2}{\tr(\vH\vSigma\vH^\top\vH\vSigma\vH^\top)} \]
which, under some well-known assumptions, allows to approximate the asymptotic distribution through a $F$- distribution.

We want to point out that for each hypothesis $\mathcal{H}_0:\vH\vtheta=\vy$ it is simple to construct $\widetilde \vH \in\R^{\ell\times d}$ and $\widetilde \vy\in\R^\ell$ such that $\mathcal{H}_0:\widetilde\vH\vtheta=\widetilde\vy$ expresses the same hypothesis. For example, we obtain the equivalent hypothesis when multiplying with a positive diagonal matrix, which means weighting the rows of the hypothesis matrix.

However, note that the corresponding ATS based on these hypothesis matrices does not necessarily yield the same value. 
Hence, it remains to be specified under which conditions two different hypothesis matrices with their corresponding vectors lead to the identical ATS. 
Since both are functions, equality here of course means the same value for each $\vx\in\R^d$.
For a special type of hypotheses, the answer to this question is comparably simple and stated in the next theorem.

\begin{theorem}\label{Theorem1}
    Let $\vH\vtheta=\vnull_m$ describe a linear solution set with $\vH\in \R^{m\times d}$  and let $\vL\in \R^{\ell\times d}$ be a matrix fulfilling $\vL^\top\vL=\vH^\top\vH$.  Then   $\vH\vtheta=\vnull_m$ and $\vL\vtheta=\vnull_\ell$ describe the same null hypothesis, and it further holds
    \begin{itemize}
        \item[1)]$ATS(\vx,\vL,\vnull_\ell)=ATS(\vx,\vH,\vnull_m)\quad $ $\forall \vx\in \R^d$,
        \item[2)]$ATS_s(\vx,\vL,\vnull_\ell)=ATS_s(\vx,\vH,\vnull_m)\quad $ $\forall \vx\in \R^d$,
        \item[3)]$ATS_F(\vx,\vL,\vnull_\ell)=ATS_F(\vx,\vH,\vnull_m)\quad $ $\forall \vx\in \R^d$.
    \end{itemize}
Moreover, $\vL^\top\vL=\vH^\top\vH$ is not only a sufficient but also a necessary condition for 1), 2) and 3).
\end{theorem}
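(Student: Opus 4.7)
The plan is built around the observation that $ATS(\vx,\vH,\vnull_m) = (\vH\vx)^\top(\vH\vx) = \vx^\top \vH^\top\vH\,\vx$, so every quantity in the statement ultimately depends on $\vH$ only through the Gram matrix $\vH^\top\vH$, and the normalising traces in $ATS_s$ and $ATS_F$ can be rewritten via the cyclic invariance of the trace as traces involving $\vH^\top\vH$ as well. This immediately yields the equivalence of the null hypotheses: under $\vL^\top\vL = \vH^\top\vH$ we have
\[\|\vH\vtheta\|^2 = \vtheta^\top\vH^\top\vH\vtheta = \vtheta^\top\vL^\top\vL\vtheta = \|\vL\vtheta\|^2\]
for every $\vtheta \in \R^d$, so $\ker\vH = \ker\vL$ and hence $\vH\vtheta = \vnull_m \Leftrightarrow \vL\vtheta = \vnull_\ell$.

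For the sufficiency of 1)--3) I would proceed by direct calculation. Item 1) is immediate from the Gram-matrix identity. For 2), the cyclic invariance of the trace gives $\tr(\vH\vSigma\vH^\top) = \tr(\vSigma\vH^\top\vH) = \tr(\vSigma\vL^\top\vL) = \tr(\vL\vSigma\vL^\top)$, which combined with 1) delivers the claim. For 3), I would apply the same manipulation to $\tr(\vH\vSigma\vH^\top\vH\vSigma\vH^\top) = \tr(\vSigma\vH^\top\vH\vSigma\vH^\top\vH)$, substitute $\vL^\top\vL$ for $\vH^\top\vH$ twice and cycle the trace back to $\tr(\vL\vSigma\vL^\top\vL\vSigma\vL^\top)$; combined with 2), this yields 3).

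For necessity I would argue from 1). Assuming $\vx^\top(\vH^\top\vH - \vL^\top\vL)\vx = 0$ for every $\vx \in \R^d$ and noting that $\vH^\top\vH - \vL^\top\vL$ is symmetric, the polarisation identity (or evaluation on the standard basis together with the sums $\ve_i + \ve_j$) forces the matrix itself to vanish, giving $\vH^\top\vH = \vL^\top\vL$. The main obstacle is the corresponding necessity for items 2) and 3) taken in isolation: both $ATS_s$ and $ATS_F$ are invariant under the rescaling $\vL^\top\vL \mapsto c\,\vH^\top\vH$ for any $c > 0$, since the scalar cancels in the trace-based normalisations. I would therefore read the theorem as stating that $\vL^\top\vL = \vH^\top\vH$ is necessary and sufficient for 1)--3) taken jointly, with 1) pinning down the constant $c = 1$; once this is observed, the polarisation argument closes the proof.
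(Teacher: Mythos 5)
Your proposal is correct and takes essentially the same route as the paper: sufficiency by rewriting everything through the Gram matrix $\vH^\top\vH$ and cycling the traces, equivalence of the hypotheses via $\|\vH\vtheta\|^2=\|\vL\vtheta\|^2$, and necessity by noting that equality of the quadratic forms in 1) forces $\vL^\top\vL=\vH^\top\vH$, where your polarisation step simply replaces the paper's equal-Hessians argument. Your closing observation is also sound: 2) and 3) in isolation only force proportionality of $\vL^\top\vL$ and $\vH^\top\vH$ (consistent with Theorem \ref{Theorem3} b)), which is exactly why the necessity claim must be read jointly with 1), as the paper's own proof implicitly does by deriving it from the unstandardised ATS.
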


Now we are able to verify if, for two given hypothesis matrices $\vH$ and $\vL$, the corresponding Anova-type-statistics have the same value and thus validate the equivalence of the two null hypotheses expressed by $\vH\vtheta=\vnull_m$ and  $\vL\vtheta=\vnull_\ell$.
However, usually only $\vH$ is given, and hence, the existence and the construction of an appropriate matrix $\vL$ remains to be addressed. Fortunately, in this case, the existence of such matrices is guaranteed, while at the same time, it is possible to specify the minimal dimension $\ell$.

\begin{Le}\label{Lemma1}
  Let  $\vH\in \R^{m\times d}$ with $r=\rank(\vH)$, then there exists a matrix $\vL\in \R^{r\times d}$ with $\vH^\top \vH=\vL^\top\vL$.
\end{Le}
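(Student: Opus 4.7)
The plan is to construct $\vL$ directly via a spectral decomposition of the Gram matrix $\vH^\top\vH$. The first step is to observe that $\vH^\top\vH\in\R^{d\times d}$ is symmetric and positive semi-definite, and that $\rank(\vH^\top\vH)=\rank(\vH)=r$. The rank identity follows from the chain $\ker(\vH^\top\vH)=\ker(\vH)$: if $\vH^\top\vH\vx=\vnull$, then $\vx^\top\vH^\top\vH\vx=\|\vH\vx\|^2=0$, forcing $\vH\vx=\vnull$, while the reverse inclusion is trivial.

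Next I would invoke the spectral theorem to write $\vH^\top\vH=\vU\vD\vU^\top$ with $\vU\in\R^{d\times d}$ orthogonal and $\vD=\diag(\lambda_1,\ldots,\lambda_r,0,\ldots,0)$, where $\lambda_1,\ldots,\lambda_r>0$ are the nonzero eigenvalues, all positive because $\vH^\top\vH$ is positive semi-definite of rank $r$. Collecting the corresponding $r$ columns of $\vU$ into $\vU_r\in\R^{d\times r}$ and setting $\vD_r=\diag(\lambda_1,\ldots,\lambda_r)\in\R^{r\times r}$ gives the compact form $\vH^\top\vH=\vU_r\vD_r\vU_r^\top$.

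Finally, I would define
\[
\vL \;=\; \vD_r^{1/2}\,\vU_r^\top \;\in\;\R^{r\times d},\qquad \vD_r^{1/2}=\diag(\sqrt{\lambda_1},\ldots,\sqrt{\lambda_r}),
\]
and verify by direct computation that
\[
\vL^\top\vL \;=\; \vU_r\vD_r^{1/2}\vD_r^{1/2}\vU_r^\top \;=\; \vU_r\vD_r\vU_r^\top \;=\; \vH^\top\vH,
\]
which is the required factorization. The dimension is exactly $r$ by construction.

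The statement contains no real obstacle: it is essentially the observation that any positive semi-definite matrix of rank $r$ admits a square-root factor of rank $r$. The only subtle point worth spelling out is the rank identity $\rank(\vH^\top\vH)=\rank(\vH)$, since without it one cannot guarantee that the minimal $r$ rows suffice; everything else is a standard application of the spectral theorem. One could equivalently obtain $\vL$ from a (thin) singular value decomposition $\vH=\vU\vSigma\vV^\top$ by taking $\vL=\vSigma_r\vV_r^\top$, but the spectral-decomposition route is slightly more direct since $\vH^\top\vH$ already appears in the claim.
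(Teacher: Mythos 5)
Your proof is correct and follows essentially the same route as the paper: the paper notes that $\vH^\top\vH$ is symmetric, positive semi-definite with $\rank(\vH^\top\vH)=\rank(\vH)=r$ and then cites a standard textbook result (itself based on the spectral decomposition) for the existence of $\vL\in\R^{r\times d}$ with $\vL^\top\vL=\vH^\top\vH$. You simply make that cited result explicit by constructing $\vL=\vD_r^{1/2}\vU_r^\top$ and proving the rank identity via $\ker(\vH^\top\vH)=\ker(\vH)$, which is a fine, self-contained way of filling in the same argument.
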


We will call this matrix $\vL$ a compact root of $\vH^\top \vH$ since in the case of $r=d$, which means full row rank of $\vH$, it accords with the classical matrix root but in general is smaller.

It is important to mention that, in general, this compact root is not unique, which can be seen with a simple example. For $\gamma$ in $[0,2\pi]$, consider
\[\vH=\begin{pmatrix}
    1 &0 &0\\
    0 & 1 &0\\
    0 & 0&0
\end{pmatrix}\quad \text{and}\quad  \vL_{\gamma}=\begin{pmatrix}
    \sin(\gamma)&-\cos(\gamma)&0\\
    \cos(\gamma)&\sin(\gamma)&0\\
    
\end{pmatrix} .\]
Then $\vH^\top \vH = \vL_\gamma^\top \vL_\gamma$ holds for each $\gamma$.

Nevertheless, since the result is based on the spectral decomposition, for which algorithms in the usual computation languages exist, it is easy to calculate one of such roots with adequate software (an example for the computation environment R can be found in the supplement).

Further, in many areas, like split-plot designs, hypothesis matrices are written as Kronecker-product of two smaller matrices, the subplot matrix $\vH_S$ and the wholeplot matrix $\vH_W$, see for example \cite{sattler2018} or \cite{sattler2021}. 
In this case, it is sufficient to calculate the compact root for the factors and build their Kronecker-product instead of calculating the compact root of the product, as $\vL_S^\top\vL_S=\vH_S^\top \vH_S$ and $\vL_W^\top\vL_W=\vH_W^\top \vH_W$ yield $(\vL_W\otimes \vL_S)^\top (\vL_W\otimes \vL_S)=(\vH_W\otimes \vH_S)^\top (\vH_W\otimes \vH_S)$.

Now, if $r<m$, by \Cref{Lemma1}, each compact root of $\vH^\top\vH$ is a smaller matrix $\vL$, still expressing the same null hypothesis and also yielding the same test statistic for all mentioned versions of the ATS. 
This allows a more compact expression of the null hypothesis and is especially attractive regarding the computation time.
In particular, for high-dimensional settings or bootstrap and permutation approaches, where the quadratic form is calculated for a large number of bootstrap or permutation runs, choosing a parsimonious hypothesis matrix and, hence, reducing the computation time for each Anova-type statistic can be very beneficial.

\begin{re}\label{Unique}
Note, that for two matrices $\vH_1\in\R^{m_1\times d}$ and $\vH_2\in\R^{m_2\times d}$ expressing the same null hypothesis through $\vH_1\vtheta=\vnull_{m_1}$ and $\vH_2\vtheta=\vnull_{m_2}$ respectively, the corresponding $\vL_1$ and $\vL_2$ are not necessary equal. This is inconvenient since we are interested in a unique procedure to represent the null hypothesis. However, we can bypass this difficulty by using the unique projection matrix $\vP\in\R^{d \times d} $. It is a well known fact that $\vP\vtheta=\vnull_d\Leftrightarrow\vH_1\vtheta=\vnull_{m_1}$ and $\vP\vtheta=\vnull_d\Leftrightarrow\vH_2\vtheta=\vnull_{m_2}$ and moreover, both matrices $\vH_1$ and $\vH_2$ lead to the same matrix $\vP$.
Based on this matrix $\vP$, we can calculate a compact root for $\vP^\top \vP=\vP$, which doesn't depend on the concrete hypothesis matrix with which we started. Since with \Cref{Theorem1} all compact roots of $\vP$ lead to the same hypothesis and the identical value of the test statistics, the concrete choice of the matrix $\vL$ has no influence.
\end{re}

As a matter of fact \Cref{Theorem1} in general only holds for $ATS(\vx,\vH,\vnull_m)$ and not for arbitrary $ATS(\vx,\vH,\vy)$ with $\vy\in\R^m$. Even for simple examples and $\widetilde \vy\in \R^\ell$ it gets clear that $ATS(\vx,\vH,\vy)\neq ATS(\vx,\vL,\widetilde \vy)$ although $\vH^\top\vH=\vL^\top\vL$. The following theorem specifies the prerequisites that ensure $ATS(\vx,\vH,\vy)=ATS(\vx,\vL,\widetilde \vy)$. This usually requires conditions on the hypothesis matrices, the corresponding vectors and the combination of both.

\begin{theorem}\label{Theorem3}
Let $\vH\vtheta=\vy$ and $\vL\vtheta=\tilde \vy$ be two non-empty solution sets with hypothesis matrices $\vH\in \R^{m\times d}$ and $\vL\in\R^{\ell\times d}$, a parameter vector $\vtheta\in\R^d$ and corresponding vectors $\vy\in \R^{m}$ and $\tilde\vy\in \R^\ell$. Then it holds
  \begin{itemize}
     \item[a)] If $a\cdot \vL^\top\vL=\vH^\top\vH$ and $a\cdot \vL^\top \tilde \vy=\vH^\top\vy$  is fulfilled for an $a\in \R\setminus\{0\} $, it holds that $\vH\theta=\vy$ and $\vL\theta=\tilde \vy$ describe the same hypothesis.
           \item[b)]  $ATS_s(\vx,\vH,\vy)=ATS_s(\vx,\vL,\tilde \vy)$ if and only if $a\cdot \vL^\top\vL=\vH^\top\vH$, $ {a}\cdot\vL^\top \tilde \vy=\vH^\top\vy$ and $||\vy||=\sqrt{a}\cdot||\tilde \vy|| $ are fulfilled for an $a\in \R\setminus\{0\} $.
            \item[c)]  $ATS_F(\vx,\vH,\vy)=ATS_F(\vx,\vL,\tilde \vy)$ if and only if $a\cdot \vL^\top\vL=\vH^\top\vH$, $ {a}\cdot\vL^\top \tilde \vy=\vH^\top\vy$ and $||\vy||= \sqrt{a}\cdot||\tilde \vy||$ are fulfilled for an $a\in \R\setminus\{0\} $.
       \end{itemize}
\end{theorem}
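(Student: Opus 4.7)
The plan is to reduce every claim to a polynomial identity in $\vx$ and then to match coefficients. For part a), I would invoke normal equations: since both solution sets are non-empty, $\vH\vtheta=\vy$ is equivalent to $\vH^\top\vH\vtheta=\vH^\top\vy$ and $\vL\vtheta=\tilde\vy$ to $\vL^\top\vL\vtheta=\vL^\top\tilde\vy$. Multiplying the second by $a$ and substituting the two hypothesized identities produces the first, so the solution sets agree.

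For parts b) and c), the key observation is the expansion
\begin{align*}
ATS(\vx,\vH,\vy)=\vx^\top\vH^\top\vH\vx-2\vy^\top\vH\vx+\vy^\top\vy,
\end{align*}
together with $\tr(\vH\vSigma\vH^\top)=\tr(\vH^\top\vH\vSigma)$ and $\tr((\vH\vSigma\vH^\top)^2)=\tr((\vH^\top\vH\vSigma)^2)$, so that both normalizing denominators depend on $\vH$ only through $\vH^\top\vH$. Under $\vH^\top\vH=a\vL^\top\vL$ this scales those two traces by $a$ and $a^2$, respectively. For the direction $(\Leftarrow)$ I would substitute all three conditions into the expansion above and check that $ATS(\vx,\vH,\vy)=a\cdot ATS(\vx,\vL,\tilde\vy)$; the factor $a$ then cancels against the denominator ratio for both $ATS_s$ and $ATS_F$.

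For $(\Rightarrow)$, assuming equality of the statistics for every $\vx$, I would define $a$ as the ratio of the normalizing denominators (nonzero by construction) and clear fractions to obtain
\begin{align*}
\vx^\top\vH^\top\vH\vx-2\vy^\top\vH\vx+\vy^\top\vy = a\bigl(\vx^\top\vL^\top\vL\vx-2\tilde\vy^\top\vL\vx+\tilde\vy^\top\tilde\vy\bigr)
\end{align*}
for all $\vx\in\R^d$. Matching the quadratic part (via polarization), the linear part, and the constant part would yield $\vH^\top\vH=a\vL^\top\vL$, $\vH^\top\vy=a\vL^\top\tilde\vy$, and $\|\vy\|^2=a\|\tilde\vy\|^2$, respectively. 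Positive semi-definiteness of $\vH^\top\vH$ and $\vL^\top\vL$ forces $a>0$ in the non-degenerate case, turning the last identity into $\|\vy\|=\sqrt{a}\,\|\tilde\vy\|$.

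The step I expect to be trickiest is the $(\Rightarrow)$ direction for $ATS_F$, whose denominator involves both $\tr(\vH\vSigma\vH^\top)$ and $\tr((\vH\vSigma\vH^\top)^2)$: a priori these could impose two competing scalings on $\vL^\top\vL$ versus $\vH^\top\vH$. The resolution I would rely on is that once the quadratic coefficient matching gives $\vH^\top\vH=a\vL^\top\vL$, the two traces automatically scale by $a$ and $a^2$, so the scalar extracted from the $ATS_F$-denominator ratio coincides with the $a$ appearing in all three conditions, and no inconsistency arises.
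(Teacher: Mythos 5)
Your proof is correct and follows essentially the same route as the paper: part a) rests on the equivalence of a consistent linear system with its normal equations (the paper phrases the same fact via $\vL\vtheta-\tilde\vy\in\im(\vL)\cap\im(\vL)^\perp=\{\vnull_\ell\}$), and parts b) and c) come down to matching the quadratic, linear and constant coefficients of a polynomial identity in $\vx$, combined with the observation that the trace denominators depend on the matrices only through $\vH^\top\vH$ and $\vL^\top\vL$, so they scale by $a$ and $a^2$ once proportionality is established. The only packaging difference is that the paper first rescales $\vH,\vL,\vy,\tilde\vy$ by $\sqrt{\tr(\vSigma\vH^\top\vH)}$ resp. $\sqrt{\tr(\vSigma\vL^\top\vL)}$ to reduce b) and c) to its supplementary Theorem for the plain ATS, whereas you carry the scalar $a$ (the denominator ratio) directly through the coefficient matching; both arguments are equivalent.
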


These conditions are independent of the data, and hence, they can be easily verified in practice.
\begin{re}
 Part b) of \Cref{Theorem3} can be formulated similarly but slightly weaker for $ATS$, as can be seen in the supplement. 
\end{re}

\begin{re}\label{Multiplerow}
 In  Corollary 3 of \cite{sattler2023hypothesis}, it was already shown that for a matrix $\vH\in \R^{m\times d}$ and corresponding vector $\vy\in\R^{m}$, having rows which are multiples of other ones, there exists a matrix $\widetilde \vH\in\R^{\ell \times d}$ and a vector $\widetilde \vy\in\R^{\ell}$ with $\ell\leq m$, fulfilling
 $ATS(\vx,\vH,\vy)=ATS(\vx,\widetilde \vH, \widetilde \vy)$ and $ATS_s(\vx,\vH,\vy)=ATS_s(\vx,\widetilde \vH,\widetilde\vy)$. From the construction of $\widetilde \vH$ from $\vH$, it directly follows that $\vH^\top\vH=\widetilde \vH^\top\widetilde \vH$, $\vH^\top\vy=\widetilde\vH^\top\widetilde \vy$ and $||\vy||=||\widetilde\vy||$ which set both results in relation.
 \end{re}

\begin{Le}\label{Lemma3}
  Let  $\vH\in \R^{m\times d}$ and $\vy\in\R^m$  formulating a hypothesis $\vH\vtheta=\vy$ with non-empty solution set. Then there exists a matrix $\vL\in \R^{r\times d}$ and a vector $\widetilde \vy\in\R^r$, with $r=\rank(\vH)$, fulfilling the conditions of \Cref{Theorem3}.
 
\end{Le}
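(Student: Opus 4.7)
The plan is to reduce the lemma to a one-line construction built on top of \Cref{Lemma1}. First, I invoke \Cref{Lemma1} to obtain a matrix $\vL\in\R^{r\times d}$ with $\vL^\top\vL=\vH^\top\vH$; this already supplies the value $a=1$ needed for the first condition of \Cref{Theorem3}. Since the solution set of $\vH\vtheta=\vy$ is non-empty by assumption, I can fix some $\vtheta_0\in\R^d$ with $\vH\vtheta_0=\vy$, and then simply define
\[
\widetilde \vy \; := \; \vL\vtheta_0 \in \R^r.
\]
The whole point of the proof is that this one definition handles both the remaining conditions at once, because multiplying $\vL$ onto $\vtheta_0$ transports the identity $\vL^\top\vL=\vH^\top\vH$ from a statement about inner products to a statement about the vectors $\vy$ and $\widetilde \vy$.

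The verifications then reduce to two short computations. For the second condition of \Cref{Theorem3} with $a=1$, I write
\[
\vL^\top \widetilde \vy \; = \; \vL^\top\vL\,\vtheta_0 \; = \; \vH^\top\vH\,\vtheta_0 \; = \; \vH^\top\vy.
\]
For the norm condition $\|\vy\|=\sqrt{a}\,\|\widetilde \vy\|=\|\widetilde \vy\|$, the same identity yields
\[
\|\widetilde \vy\|^2 \; = \; \vtheta_0^\top \vL^\top\vL\,\vtheta_0 \; = \; \vtheta_0^\top\vH^\top\vH\,\vtheta_0 \; = \; \|\vH\vtheta_0\|^2 \; = \; \|\vy\|^2.
\]
Together with $a\cdot\vL^\top\vL=\vH^\top\vH$ from \Cref{Lemma1}, all three conditions of \Cref{Theorem3} hold, which finishes the existence claim.

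The only point that is not a direct computation is that $\widetilde \vy$ should be well defined, since $\vtheta_0$ with $\vH\vtheta_0=\vy$ is typically not unique. This is the step I expect to double-check carefully: if $\vH\vtheta_0=\vH\vtheta_0'=\vy$, then $\vH(\vtheta_0-\vtheta_0')=\vnull_m$ implies $(\vtheta_0-\vtheta_0')^\top\vH^\top\vH(\vtheta_0-\vtheta_0')=0$, and since $\vH^\top\vH=\vL^\top\vL$, this forces $\|\vL(\vtheta_0-\vtheta_0')\|^2=0$, hence $\vL\vtheta_0=\vL\vtheta_0'$. So $\widetilde \vy$ depends only on $\vy$, not on the chosen preimage, and the construction is canonical once $\vL$ is fixed. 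The genuine obstacle, if any, is really hidden inside \Cref{Lemma1}; once that matrix $\vL$ is available, the passage from $\vy$ to $\widetilde \vy$ via any preimage under $\vH$ is the natural and essentially forced choice.
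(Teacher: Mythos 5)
Your proof is correct, and it takes a more direct route than the paper's. The paper obtains the pair $(\vL,\widetilde\vy)$ from \Cref{Lemma2}, whose proof establishes the existence of $\widetilde\vy$ abstractly via $\vH^\top\vy\in\im(\vH^\top)=\ker(\vH)^\perp=\ker(\vL)^\perp=\im(\vL^\top)$, and then rescales by $\sqrt{a}=||\vy||/||\widetilde\vy||$ to address the norm condition of \Cref{Theorem3}. You use only \Cref{Lemma1}: picking any $\vtheta_0$ with $\vH\vtheta_0=\vy$ (non-emptiness) and setting $\widetilde\vy=\vL\vtheta_0$ gives an explicit solution of $\vL^\top\widetilde\vy=\vH^\top\vy$, and the same identity $\vL^\top\vL=\vH^\top\vH$ immediately yields $||\widetilde\vy||=||\vH\vtheta_0||=||\vy||$, so all three conditions of \Cref{Theorem3} hold with $a=1$. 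This buys a simpler and in fact sharper argument: since the compact root $\vL$ has full row rank $r$, the map $\vL^\top$ is injective, so your $\widetilde\vy$ is the unique solution of $\vL^\top\widetilde\vy=\vH^\top\vy$ and coincides with the vector delivered by \Cref{Lemma2}; hence the norm condition is automatic whenever the solution set is non-empty, and the rescaling step in the paper's proof (and the corresponding branch of the procedure described after \Cref{Lemma3}) is never actually needed. Your construction also makes transparent why rescaling could not have been the essential ingredient: all three conditions of \Cref{Theorem3} are invariant under $(\vL,\widetilde\vy,a)\mapsto(c\vL,c\widetilde\vy,a/c^{2})$, so only the automatic equality $||\widetilde\vy||=||\vy||$ can make the third condition hold. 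The closing well-definedness check (independence of the chosen preimage $\vtheta_0$) is not required for the bare existence statement, but it is correct and is simply the injectivity of $\vL^\top$ viewed from the other side.
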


Now \Cref{Lemma2} from the supplement shows a potential procedure to find for given hypothesis $\vH\vtheta=\vy$  a matrix and a vector fulfilling the conditions to apply \Cref{Theorem3}. First, a compact root $\vL$ of $\vH^\top\vH$ is calculated, and subsequent for this matrix, the system $\vL^\top\widetilde \vy=\vH^\top\vy$ is solved to find a sufficient $\widetilde \vy$. Here, neither the compact root nor the solution of the linear equation system has to be unique. 
If, then $||\widetilde \vy||=||\vy||$ the conditions are fulfilled. Otherwise we define $\sqrt{a}=||\vy||/||\widetilde\vy||$ and obtain the wanted matrix through $\sqrt{a}\cdot \vL$ and the corresponding vector through $\sqrt{a}\cdot \widetilde\vy$.\\\\
In contrast to \Cref{Unique}, for $\vy\neq \vnull_d$, there is no general guideline on how to choose the hypothesis matrix $\vH$, so it is possible to use different matrices. Now, for each $\vH$, one can find an adequate $\vL$, but of course, two hypothesis matrices $\vH_1$ and $\vH_2$ for the same hypothesis can lead to different $\vL_1$ and $\vL_2$  and different values, unless they fulfil also the conditions of \Cref{Theorem3}.
Therefore, each test of a multivariate hypothesis needs to specify the hypothesis matrix used.

\begin{re}
It is evident that the ATS for different hypothesis matrices may differ not only in their test decision but also regarding their power, as it can be seen for $\vH_1$ and $\vH_2$ from \eqref{examples}. This leads to the idea of valuing different hypothesis matrices regarding their power under some alternatives. Although this seems a reasonable approach, it is not feasible since the asymptotic distribution of the ATS usually is a weighted sum of independent $\chi_1^2$ distributed variables. Since the weights strongly depend on the hypothesis matrix, this also holds for the distribution under the null hypothesis, which prohibits comparability.\\
Therefore, it is recommended that the hypothesis matrix be formulated carefully and in a way that all alternatives are considered uniformly. Although this can result in extensive hypothesis matrices at first, the corresponding hypothesis matrix based on the compact root nonetheless provides the minimal number of rows.
\end{re}

The applicability of our results is not limited to the three presented versions of the ATS nor their usage as a test statistic. For example, in \cite{pEB} and \cite{sattler2018}, another kind of ATS is used for a high-dimensional setting. Several trace estimators are defined within this procedure, which are U-statistics based on the ATS. 
All of these estimators are based on a large number of such quadratic forms, which can be implemented using a compact root. This is extremely attractive in combination with the high dimension of the underlying vectors.

\section*{Simulation}
To investigate the influence of the chosen hypothesis matrix on the computation time, in this section, we consider three different settings and subsequently different hypotheses
\begin{itemize}
\item[A)]Consider independent $q$ dimensional observation vectors   $\vX_{ik}$, $k \in \{1,...,n_i\}$ from two groups which are identically distributed within these groups and have sample size $n_i$, $i \in \{1,2\}$. With $\E(\vX_{ij})=\vmu_i\in\R^d$, $\Cov(\vX_{ij})=\vV_i>0$ in context of repeated measure a usual null hypothesis is the hypothesis of no group effect for $\vmu=(\vmu_1^\top,\vmu_2^\top)^\top$. This can be expressed through $(\vP_2 \otimes \vJ_q)\vmu=\vnull_{2q}$, with $\vP_d=\vI_d-\vJ_d/d$, where $\vJ_d$ is a $d\times d$ matrix containing only 1's and $\vI_d$ is the identity with dimension $d$. 

\item[B)]
In the setting of A) but for three groups, we are interested in the hypothesis of equal expectation vectors of the groups, given through
$(\vP_3\otimes \vI_q)\vmu=\vnull_{3q}$.

\item[C)]
Now consider independent observations $\vX_1,...,\vX_{n}$ with covariance matrix $\vV\in\R^{p\times p}$. To investigate the hypothesis  $\mathcal{H}_0: \tr(\vV)=\gamma$ in \cite{sattler2022} the upper triangular vectorisation 
\[ \vech(\vV)=(v_{11},...,v_{1p},v_{22},...,v_{2p},...,v_{pp})^\top \in\R^{d}, \] $d=p(p+1)/2$ was used. This hypothesis can be expressed through $(\vh_p\cdot \vh_p^\top)\vech(\vV)=\gamma \cdot \veins_d$, with $\vh_p:=(1,\vnull_{p-1}^\top, 1, \vnull_{p-2}^\top, \dots, 1,0,1)^\top\in \R^{d}$. 

\end{itemize}
Setting A) and C) were also investigated in \cite{sattler2023hypothesis} for the WTS.
Our focus is on the calculation of the ATS, so the above hypothesis serves only as motivation, and we generate $\vX_{A)}\sim\mathcal{N}_{2q}(\vnull_{2q},\vV_{2q})$, $\vX_{B)}\sim\mathcal{N}_{3q}(\vnull_{3q},\vV_{3q})$ and $\vX_{C)}\sim\mathcal{N}_{p}(\veins_{p},\vV_{p})$.

Here $\vV_{d}$ denotes a d-dimensional compound symmetry matrix given through $\vV_{d}=\vI_d+\veins_d\veins_d^\top$, and we consider $q=(5,10,20,50,100,200)$ and $p=(5,10,15,20,25,30)$.

To obtain more reliable results, the time required for 5.000 calculations of the quadratic form is measured, which is a usual number of bootstrap steps or permutations. The computations were executed using the \textsc{R}-computing environment version 4.3.3 \cite{R}  on servers with two AMD EPYC 7453 28-Core processors and 256 GB DDR4 memory with Debian GNU/Linux 12. The required time in seconds for the $ATS_s$ is displayed in \Cref{tab:Zeit3}, while the values for the $ATS$ can be found in the appendix.\\\\

  \begin{table}[ht]
\centering

      \begin{tabular}{x{1.5pt}lx{1.5pt}lx{1.5pt}r|r|r|r|r|rx{1.5pt}}
       \specialrule{1.5pt}{0pt}{0pt}
   \multicolumn{1}{x{1.5pt}lx{1.5pt}}{}
&\multicolumn{7}{ cx{1.5pt}}{calculation time in seconds}
\\ \specialrule{1.5pt}{0pt}{0pt}
 \multirow{ 3}{*}{A)}&d(q)   &10(5)&20(10)&40(20)&100(50)&200(100)  &400(200) \\  
       \Cline{1.0pt}{2-8}
  &  $ATS_s(\vP_2 \otimes \vJ_q,\vnull_{2q})$& 0.863 & 1.033 & 1.660 & 10.519 & 45.278 & 314.871 \\   \Cline{0.2pt}{2-8}
   &   $ATS_s(\vL_{A)},0)$ & 0.739 & 0.775 & 0.819 & 1.043 & 1.837 & 4.709 \\   
       \specialrule{1.5pt}{0pt}{0pt}
       
       \multicolumn{8}{x{1.5pt} cx{1.5pt}}{}\\
\specialrule{1.5pt}{0pt}{0pt}
\multirow{ 3}{*}{B)} &   d(q)   &15(5)&30(10)&60(20)&150(50)&300(100)  &600(200) \\
     \Cline{1.0pt}{2-8}
  &     $ATS_s(\vP_3\otimes \vI_q,\vnull_{3q})$ & 0.963 & 1.313 & 2.944 & 22.241 & 160.721 & 1076.300 \\   \Cline{0.2pt}{2-8}
  &     $ATS_s(\vL_{B)},\vnull_{2q})$ & 0.914 & 1.152 & 2.133 & 15.342 & 87.141 & 577.770 \\      
      \specialrule{1.5pt}{0pt}{0pt}

          \multicolumn{8}{x{1.5pt} cx{1.5pt}}{}\\
\specialrule{1.5pt}{0pt}{0pt}
\multirow{ 3}{*}{C)} &   d(p)& 15(5)&55(10)&120(15)&210(20)&325(25)&465(30)\\
    \Cline{1.0pt}{2-8}
    &   $ATS_s(\vh_p\cdot \vh_p^\top,\gamma\cdot\veins_d)$ &0.935 & 2.835 & 12.153 & 53.132 & 179.819 & 526.618 \\   \Cline{0.2pt}{2-8}
    &   $ATS_s(\vL_{C)},\gamma)$ &  0.742 & 0.856 & 1.156 & 1.994 & 3.542 & 6.159 \\          
      \specialrule{1.5pt}{0pt}{0pt}   
      \end{tabular}

\caption{ Average computation time in seconds of  $ATS_s$ based on different hypothesis matrices with different dimensions.}
  \label{tab:Zeit3}
\end{table}
It is clear that the settings strongly differ in the ratio between the number of rows $d$ of the classical hypothesis matrix, and $\ell$ minimal number of required rows. This gets clear in setting C), which is an extreme case and therefore leads already for a 10-dimensional observation vector $\vX$ to savings of two-thirds of the computation time, and for large dimensions to even much larger savings, nesting between 98-99\%. Also considering part A), for groups with dimension $p=20$, more than 50\% of the computation time could be saved by using the smaller matrix, which grows up to 98\% for larger dimensions about 200.\\
In B) it is easy to see that $\ell/d=2/3$ is the largest value of all hypotheses considered here (and therefore the smallest reduction of rows), but savings are nevertheless substantial. Although for the low dimensional case $p=5$, the total computation time is only in the area of a few seconds already, for 5-dimensional observation vectors, more than 5\% can be saved, which increases to more than 30\% for observation vectors with $q\geq 20$. Since this is a widespread hypothesis for comparing three groups, it shows the great benefit of our results. 
{Moreover, even if a time reduction of about 5\% for low dimensions $d$ seems small, it has to be taken into account that the result of the tests is not influenced, which means that no price has to be paid for this reduction.\\
Further, we want to emphasize that the demonstrated reduction is only for the $ATS$. In other situations with $ATS_s$, also the covariance matrix $\Var(\vT(\vX))=\vSigma$ has to be estimated to afterwards calculate the trace of $\vH\vSigma\vH^\top$. This can be done more efficiently by estimating $\Var(\vH\vT(\vX))$ where again, the usage of $\vL$ has a large potential.

\section*{Conclusion}
Together with the WTS, the ATS is among the most common quadratic forms for testing multivariate hypotheses. Hereby, in contrast to the WTS, the chosen hypothesis matrix usually affects the value of the test statistic and, therefore, also the p-value and the test decision. In the presented work, we investigated in detail under which conditions two hypothesis matrices lead to the identical ATS while studying the equality of the corresponding null hypotheses.
In the interest of computation time, we also proposed a procedure for each hypothesis matrix to construct a companion matrix with a minimal number of rows while maintaining the hypothesis and yielding the identical value of the test statistic.
This can reduce the calculation effort substantially and is exceptional in its generality. Moreover, it can be used for several versions of the ATS as well as estimators based on such quadratic forms and provides even further possibilities. In future work, a general guideline on how to choose the hypothesis matrix in case of $\vy\neq \vnull_d$ would complete our results.
\section*{Acknowledgements}
The work of Paavo Sattler was funded by
Deutsche Forschungsgemeinschaft Grant/Award Number DFG PA 2409/3-2. Paavo Sattler's research was
conducted while visiting Heidelberg University as a guest scientist. He is grateful
for the wonderful research environment and hospitality and especially the support of Ricardo Blum for the presented work.

\bibliography{Literatur}
\bibliographystyle{elsarticle-num}

\section*{Appendix}
\subsection{Proofs and further results}
\begin{proof}[Proof of \Cref{Theorem1}]
   
     For the different versions of the ATS, it is easy to calculate the equalities through\\\\
     $ATS(\vx,\vL,\vnull_\ell)= (\vL\vx)^\top (\vL\vx)=\vx^\top (\vL^\top \vL)\vx=\vx^\top (\vH^\top \vH)\vx=ATS(\vx,\vH,\vnull_m)$,\\\\
$ATS_s(\vx,\vL,\vnull_\ell)= ATS(\vx,\vL,\vnull_\ell)/\tr(\vL\vSigma\vL^\top)=ATS(\vx,\vL,\vnull_\ell)/\tr(\vSigma\vL^\top\vL)=ATS(\vx,\vH,\vnull_m)/\tr(\vSigma\vH^\top\vH)=ATS_s(\vx,\vH,\vnull_m)$,\\\\
$ATS_F(\vx,\vL,\vnull_\ell)=ATS_s(\vx,\vL,\vnull_\ell)\cdot \frac{[\tr(\vL\vSigma\vL^\top)]^2}{\tr(\vL\vSigma\vL^\top\vL\vSigma\vL^\top)}=ATS_s(\vx,\vH,\vnull_m)\cdot \frac{[\tr(\vSigma\vL\vL^\top)]^2}{\tr(\vSigma\vL\vL^\top\vSigma\vL\vL^\top)}=ATS_s(\vx,\vH,\vnull_m)\cdot \frac{[\tr(\vSigma\vH\vH^\top)]^2}{\tr(\vSigma\vH\vH^\top\vSigma\vH\vH^\top)}=ATS_F(\vx,\vH,\vnull_m) $.\\\\
So it remains to show, that this condition is also necessary. We know that $ATS(\vx,\vL,\vnull_\ell)$ is a polynomial function in $\vx$ and is therefore differentiable. Now if $ATS(\vx,\vL,\vnull_\ell)=ATS(\vx,\vH,\vnull_m)$ also the corresponding Hessian matrix are equal. They are easy to calculate, which leads to 
$\vL^\top\vL=\vH^\top\vH$.

Now  we compare the two hypotheses:\\

 Let $\vH\vtheta=\vnull_m$, then it is clear \[ \vH^\top\vH\vtheta=\vnull_d \Rightarrow\vL^\top \vL\vtheta=\vnull_d \Rightarrow \vtheta^\top\vL^\top(\vL\vtheta)=0 \Leftrightarrow ||\vL\vtheta||^2=0\Leftrightarrow \vL\vtheta=\vnull_\ell.\] Identically, if 
    $\vL\vtheta=\vnull_\ell$ it follows
     \[\vL^\top\vL\vtheta=\vnull_d \Rightarrow\vH^\top \vH\vtheta=\vnull_d \Rightarrow \vtheta^\top\vH^\top\vH\vtheta=0\Leftrightarrow ||\vH\vtheta||^2=0\Leftrightarrow\vH\vtheta=\vnull_m.\]
     Therefore, both solution sets are equal, and so also both null hypotheses are equal.
\end{proof}

\begin{proof}[Proof of \Cref{Lemma1}]
It is clear that $\vA:=\vH^\top \vH$ is symmetric and positive semidefinit. Moreover, it is known that $\rank(\vA)=\rank(\vH)=r$. Now it is a well-known result (see e.g. Theorem 4.5 in \cite{schott2017}),  that there exists $\vL\in\R^{r\times d}$ with $\vL^\top\vL=\vA=\vH^\top\vH$.
\end{proof}

\begin{theorem}\label{Theorem2}
Let $\vH\vtheta=\vy$ and $\vL\vtheta=\widetilde \vy$ be two non-empty solution sets with hypothesis matrices $\vH\in \R^{m\times d}$ and $\vL\in\R^{\ell\times d}$, a parameter vector $\vtheta\in\R^d$ and corresponding vectors $\vy\in \R^{m}$ and $\widetilde\vy\in \R^\ell$. Then it holds
 \begin{itemize}
    
     \item[a)] $ATS(\vx,\vH,\vy)=ATS(\vx,\vL,\widetilde \vy)-(||\widetilde \vy||^2-||\vy||^2)$ if and only if $\vL^\top\vL=\vH^\top\vH$ and $ \vL^\top \widetilde \vy=\vH^\top\vy$ are fulfilled      
      \item[b)] $ATS(\vx,\vH,\vy)=ATS(\vx,\vL,\widetilde \vy)$ if and only if $\vL^\top\vL=\vH^\top\vH$, $ \vL^\top \widetilde \vy=\vH^\top\vy$ and $||\vy||=||\widetilde \vy||$ are fulfilled.
      
 \end{itemize}

\end{theorem}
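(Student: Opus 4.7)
The plan is to expand both sides of the claimed identities in $\vx$ and then match coefficients of the resulting polynomial expression, exploiting that the ATS is a quadratic polynomial in $\vx$ with a very transparent decomposition into a quadratic, linear and constant part.

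First I would write
\[
ATS(\vx,\vH,\vy)=\vx^\top\vH^\top\vH\vx-2\vy^\top\vH\vx+\|\vy\|^2,\qquad
ATS(\vx,\vL,\widetilde\vy)=\vx^\top\vL^\top\vL\vx-2\widetilde\vy^\top\vL\vx+\|\widetilde\vy\|^2.
\]
For the \emph{if}-direction of part a), I would substitute $\vL^\top\vL=\vH^\top\vH$ and $\vL^\top\widetilde\vy=\vH^\top\vy$ directly into $ATS(\vx,\vL,\widetilde\vy)$; the quadratic part becomes $\vx^\top\vH^\top\vH\vx$, the linear part becomes $-2\vy^\top\vH\vx$, and only the constant term differs by $\|\widetilde\vy\|^2-\|\vy\|^2$, which is precisely the correction absorbed on the right-hand side of a). Part b) then follows immediately: adding the hypothesis $\|\vy\|=\|\widetilde\vy\|$ kills the constant discrepancy, and conversely, subtracting the identities of parts a) and b) isolates $\|\widetilde\vy\|^2-\|\vy\|^2=0$, from which $\|\vy\|=\|\widetilde\vy\|$ follows since norms are nonnegative.

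For the \emph{only if}-direction I would mimic the argument already used in the proof of \Cref{Theorem1}. Setting
\[
D(\vx):=ATS(\vx,\vH,\vy)-ATS(\vx,\vL,\widetilde\vy)+\bigl(\|\widetilde\vy\|^2-\|\vy\|^2\bigr)
     =\vx^\top(\vH^\top\vH-\vL^\top\vL)\vx-2(\vH^\top\vy-\vL^\top\widetilde\vy)^\top\vx,
\]
the hypothesis in a) is $D(\vx)\equiv 0$ on $\R^d$. Since $D$ is a polynomial of degree at most two in $\vx$, comparing Hessians yields $\vH^\top\vH-\vL^\top\vL=\vnull$, and comparing gradients at $\vx=\vnull_d$ yields $\vH^\top\vy-\vL^\top\widetilde\vy=\vnull_d$. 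For part b) the identity $ATS(\vx,\vH,\vy)=ATS(\vx,\vL,\widetilde\vy)$ must in addition force the constant term to vanish, so $\|\vy\|^2=\|\widetilde\vy\|^2$ and hence $\|\vy\|=\|\widetilde\vy\|$.

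There is essentially no hard step here once the expansion is written down; the only point requiring a little care is the distinction between the equalities in a) and b), namely recognizing that a) relaxes b) precisely by allowing the $\|\vy\|^2$-term mismatch, and that all remaining coefficient comparisons are legitimate because the polynomial identity must hold for every $\vx\in\R^d$ (equivalently, by the Hessian/gradient argument). This also parallels the supplementary remark on $ATS$ behind \Cref{Theorem3}(b) and explains why the conditions for $ATS_s$ and $ATS_F$ in \Cref{Theorem3} differ from those for $ATS$ only by the scaling factor $a$ arising from the normalisation by $\tr(\vH\vSigma\vH^\top)$.
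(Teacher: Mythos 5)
Your proof is correct and takes essentially the same route as the paper's: both expand the difference of the two quadratic forms in $\vx$ and recover the conditions $\vL^\top\vL=\vH^\top\vH$, $\vL^\top\widetilde\vy=\vH^\top\vy$ (and in b) the norm equality) by comparing Hessians, gradients and constant terms. You are in fact slightly more explicit than the paper about the sufficiency direction, which the paper leaves implicit in the expansion, but the underlying argument is the same.
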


\begin{proof}[Proof of \Cref{Theorem2}]
\noindent
\begin{itemize}
  
\item[a)]
 If $ATS(\vx,\vH,\vy)=ATS(\vx,\vL,\widetilde \vy)-(||\widetilde \vy||^2-||\vy||^2)=0$ and we consider the difference of the two quadratic forms we get
 \[ATS(\vx,\vH,\vy)-ATS(\vx,\vL,\widetilde \vy)=(\vx^\top \vH^\top\vH\vx-\vx^\top\vL^\top\vL\vx)+2\vx^\top(\vL^\top\widetilde{\vy}-\vH^\top{\vy})-(\widetilde \vy^\top\widetilde \vy-\vy^\top\vy)\]
 and therefore
 \[(\vx^\top \vH^\top\vH\vx-\vx^\top\vL^\top\vL\vx)+2\vx^\top(\vL^\top\widetilde{\vy}-\vH^\top\vy)=0\quad \forall \vx\in\R^d\]
for all $\vx\in\R^d$. Therefore it also holds for the corresponding Hessian matrix, resulting in $\vH^\top\vH-\vL^\top\vL=\vnull_{d\times d}$. Based on this if we consider the gradient we get $(\vL^\top\widetilde{\vy}-\vH^\top{\vy})=\vnull_d$ which fulfills the proof.
\item[b)]
This result follows directly from the previous one.
  \end{itemize}
\end{proof}
Now, if only $\vH^\top\vy=\vL^\top\widetilde \vy$ and $\vL^\top\vL=\vH^\top\vH$ are fulfilled, the difference between both statistics is a known value and also not depends on the data. So, the test statistics are only shifted, and their quantiles are also shifted. This leads to identical test decisions, and therefore the biggest issue with different values of the test decisions did not occur.

\begin{Le}\label{Lemma2}
  Let  $\vH\in \R^{m\times d},\vy\in\R^m$ with $r=\rank(\vH)$ formulating a hypothesis $\vH\vtheta=\vy$ with non-empty solution set. Then there exists a matrix $\vL\in \R^{r\times d}$ and a vector $\widetilde \vy\in\R^r$ fulfilling $\vH^\top \vH=\vL^\top\vL$ and $\vH^\top\vy=\vL^\top\widetilde \vy$. 
\end{Le}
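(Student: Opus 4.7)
The plan is to first produce $\vL$ by applying \Cref{Lemma1}, and then to produce $\widetilde\vy$ by showing that the linear system $\vL^\top\widetilde\vy=\vH^\top\vy$ is consistent.

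First, \Cref{Lemma1} furnishes a matrix $\vL\in\R^{r\times d}$ with $\vL^\top\vL=\vH^\top\vH$, which immediately gives the first required identity. Since $\vL$ has only $r=\rank(\vH)=\rank(\vH^\top\vH)=\rank(\vL^\top\vL)$ rows and $\rank(\vL)\geq \rank(\vL^\top\vL)=r$, the matrix $\vL$ has full row rank $r$; in particular $\vL^\top\in\R^{d\times r}$ has $r$ linearly independent columns, so the map $\widetilde\vy\mapsto \vL^\top\widetilde\vy$ is injective and its image has dimension $r$.

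The key step is the identification $\range(\vL^\top)=\range(\vH^\top)$. I would prove this via the standard fact that for any matrix $\vA$ one has $\ker(\vA^\top\vA)=\ker(\vA)$, because $\vA^\top\vA\vx=\vnull$ implies $\|\vA\vx\|^2=\vx^\top\vA^\top\vA\vx=0$. Taking orthogonal complements gives $\range(\vA^\top\vA)=\range(\vA^\top)$. Applying this to both $\vH$ and $\vL$ and using $\vH^\top\vH=\vL^\top\vL$ yields
\[
\range(\vH^\top)=\range(\vH^\top\vH)=\range(\vL^\top\vL)=\range(\vL^\top).
\]
In particular, $\vH^\top\vy\in\range(\vH^\top)=\range(\vL^\top)$, so there exists a (unique, by injectivity) $\widetilde\vy\in\R^r$ with $\vL^\top\widetilde\vy=\vH^\top\vy$, which is the second required identity.

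The main obstacle, modest as it is, is the range identity $\range(\vA^\top\vA)=\range(\vA^\top)$; once this is in hand, the construction of $\widetilde\vy$ is automatic and no use of the non-emptiness of the solution set is actually needed (the hypothesis $\vH\vtheta=\vy$ being solvable is inherited from the context of \Cref{Theorem3}, but plays no role in the existence argument here). Note that this proof also provides a constructive recipe: compute a compact root $\vL$ as in \Cref{Lemma1} and then solve the (consistent) overdetermined system $\vL^\top\widetilde\vy=\vH^\top\vy$, matching the procedure described in the text following \Cref{Lemma3}.
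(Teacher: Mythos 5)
Your proof is correct and follows essentially the same route as the paper: both obtain $\vL$ from \Cref{Lemma1} and then establish $\im(\vH^\top)=\im(\vL^\top)$ via the kernel identity and orthogonal complements to conclude that $\vL^\top\widetilde\vy=\vH^\top\vy$ is solvable. The only cosmetic difference is that you derive $\ker(\vH)=\ker(\vL)$ directly from $\ker(\vA^\top\vA)=\ker(\vA)$ rather than citing \Cref{Theorem1}, and your added observations (uniqueness of $\widetilde\vy$ by full row rank of $\vL$, and that non-emptiness of the solution set is not needed) are accurate but inessential.
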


In particular, \Cref{Lemma2} yields that for each hypothesis $\vH\vtheta=\vy$, the statements formulated in parts a) of \Cref{Theorem3} and \Cref{Theorem2} respectively hold. Thus, we can rephrase each hypothesis $\vH\vtheta=\vy$ through a smaller matrix $\vL$, still maintaining the same hypothesis and only shifting the value of the corresponding ATS by a known amount, dependent on $\vy$ and $\widetilde\vy$.

\begin{proof}[Proof of \Cref{Lemma2}]
Based on \Cref{Lemma1} it follows that $\vL$, the compact root of $\vH^\top\vH$, fulfills $\vH^\top\vH=\vL^\top\vL$. So it remains to show that for each $\vy\in \R^m$ there exists a $\widetilde\vy\in\R^r$ fulfilling $\vH^\top\vy=\vL^\top\widetilde \vy$.\\
From \Cref{Theorem1} we know $\ker(\vH)=\ker(\vL)$ and therefore $\ker(\vH)^\perp=\ker(\vL)^\perp$. Now, since $\vH^\top\vy \in \im(\vH^\top)$ together with $\im(\vH^\top)=\ker(\vH)^\perp=\ker(\vL)^\perp=\im(\vL^\top)$, it follows $\vH^\top\vy \in \im(\vL^\top)$. 
Therefore, at least one  $\widetilde\vy\in\R^r$ with $\vH^\top\vy=\vL^\top\widetilde \vy$ exists, which completes the proof.
\end{proof}
Here, since the matrix and the corresponding vector are not unique, some potential $\widetilde\vy$ may fulfil $||\vy||=||\widetilde\vy||$ while others do not.

\begin{proof}[Proof of \Cref{Theorem3}]
\noindent
\begin{itemize}\item[a)]

  Let $\vH\vtheta=\vy$ then it is clear \[ \vH^\top\vH\vtheta=\vH^\top\vy \Rightarrow a\cdot \vL^\top \vL\vtheta=a\cdot \vL^\top\widetilde \vy \Rightarrow \vL^\top(\vL\vtheta-\widetilde \vy)=\vnull_d \Leftrightarrow \vL\vtheta-\widetilde \vy\in \ker(\vL^\top) .\]
  Now, we can use that the kernel of $\vL^\top$ is the same as the orthogonal complement of the image of $\vL$, denoted by $\im(\vL)^\perp$.
  At the same time, we know that because of the non-empty solution set, $\widetilde\vy \in \im(\vL)$, obviously it also holds $\vL\vtheta\in \im(\vL)$ and therefore also $\vL\vtheta-\widetilde \vy\in\im(\vL)$. Since we are in a finite-dimensional vector space, the only vector which can be of $\im(\vL)$ and $\im(\vL)^\perp$ is the zero vector and hence $\vL\vtheta=\widetilde \vy$. The other direction can be proven identically, which completes the proof.

\item[b)]
Define $\widetilde \vH=\frac{\vH}{\sqrt{\tr(\vSigma\vH^\top\vH)}}$, $\widetilde \vL=\frac{\vL}{\sqrt{\tr(\vSigma\vL^\top\vL)}}$   as well as $\vbeta=\frac{\vy}{\sqrt{\tr(\vSigma\vH^\top\vH)}}$ and $\widetilde \vbeta=\frac{\widetilde\vy}{\sqrt{\tr(\vSigma\vL^\top\vL)}}$. Then it holds
$ATS_s(\vx,\vH,\vy)=ATS(\vx,\widetilde \vH,\vbeta)$ and $ATS_s(\vx,\vL,\widetilde \vy)=ATS(\vx,\widetilde \vL,\widetilde\vbeta)$. Now we can use \Cref{{Theorem2}} and get 
$ATS(\vx,\widetilde \vH,\vbeta)=ATS(\vx,\widetilde \vL,\widetilde\vbeta)$ if an only if $\widetilde \vH^\top \widetilde \vH=\widetilde \vL^\top \widetilde \vL$, $\widetilde \vH^\top\vbeta=\widetilde \vL^\top \widetilde \vbeta$ and $||\vbeta||=||\widetilde\vbeta||$. This leads to

\[\frac{\tr(\vSigma\vH^\top\vH)}
{\tr(\vSigma\vL^\top\vL)}\cdot 
\vL^\top\vL=\vH^\top\vH\]
and therefore $ \vL^\top\vL$ and $\vH^\top\vH$ have to be proportional. For this reason, such an $a\in\R$ has to exist and in the same way we get  $||\vy||=\sqrt{a}\cdot ||\widetilde \vy||$. From $\widetilde \vH^\top\vbeta=\widetilde \vL^\top \widetilde \vbeta$ it follows $ {a}\cdot\vL^\top \widetilde \vy=\vH^\top\vy$.

\item[c)]The proof follows identical to part b), with appropriate $\widetilde \vH$ and $\widetilde \vL$, resp. $\vbeta$ and $\widetilde\vbeta$. 
\end{itemize}
\end{proof}

\begin{proof}[Proof of \Cref{Lemma3}]

From the proof of \Cref{Lemma2},  we know how to find $\vL$ and $\widetilde\vy$ to fulfill $\vL^\top\vL=\vH^\top\vH$ and $\vL^\top\widetilde \vy=\vH^\top\vy$. 
Now the matrix $\sqrt{a}\cdot \vL$ and the vector $\sqrt{a}\cdot \widetilde\vy$ of course fullfilling $a\cdot\vL^\top\vL=\vH^\top\vH$ and $a\cdot\vL^\top\widetilde \vy=\vH^\top\vy$.
And with $\sqrt{a}\cdot ||\widetilde\vy||=||\vy||$ it also fullfilles the third condition of \Cref{Theorem3}, which completes the proof.
\end{proof}
\subsection{Further simulations}
Since the matrix multiplication and calculating the trace for the standardization requires some calculation time, it is unsurprising that all durations here are clearly smaller than those for the $ATS_s$.
Although the percentage of saved computation time for lower dimensions is here slightly lower than in \Cref{tab:Zeit3}, for higher ones, it could be analogous as for parts A) and B) but also reversed as in part C).
All together, it shows one more time the potential of this reduction of hypothesis matrices rows.

 \begin{table}[ht]
\centering
      \begin{tabular}{x{1.5pt}lx{1.5pt}lx{1.5pt}r|r|r|r|r|rx{1.5pt}}
       \specialrule{1.5pt}{0pt}{0pt}
  \multicolumn{2}{x{1.5pt}lx{1.5pt}}{}
&\multicolumn{6}{ cx{1.5pt}}{calculation time in seconds}
\\ \specialrule{1.5pt}{0pt}{0pt}
 &d(q)   &10(5)&20(10)&40(20)&100(50)&200(100)  &400(200) 
      \\  
       \Cline{1.1pt}{2-8}
A)   & $ATS(\vP_2 \otimes \vJ_q,\vnull_{2q})$& 0.256 & 0.283 & 0.318 & 0.529 & 1.218 & 4.105 \\   \Cline{0.2pt}{2-8}
      &$ATS(\vL_{A)},0)$ &  0.245 & 0.258 & 0.269 & 0.286 & 0.291 & 0.317 \\ 
       \specialrule{1.5pt}{0pt}{0pt}
       
      \multicolumn{8}{x{1.5pt} cx{1.5pt}}{}\\
     
\specialrule{1.5pt}{0pt}{0pt}
\multirow{ 3}{*}{B)} &   d(q)   &15(5)&30(10)&60(20)&150(50)&300(100)  &600(200) \\
     \Cline{1.1pt}{2-8}
&       $ATS(\vP_3\otimes \vI_q,\vnull_{3q})$ & 0.263 & 0.291 & 0.376 & 0.829 & 2.507 & 8.681 \\  \Cline{0.2pt}{2-8}
   &    $ATS(\vL_{B)},\vnull_{2d})$ &  0.272 & 0.283 & 0.345 & 0.655 & 1.772 & 5.938 \\       
      \specialrule{1.5pt}{0pt}{0pt}

        \multicolumn{8}{x{1.5pt} cx{1.5pt}}{}\\
\specialrule{1.5pt}{0pt}{0pt}
\multirow{ 3}{*}{C)} &   d(p)& 15(5)&55(10)&120(15)&210(20)&325(25)&465(30)\\
    \Cline{1.pt}{2-8}
 &      $ATS(\vh_p\cdot \vh_p^\top,\gamma\cdot\veins_d)$ & 0.263 & 0.356 & 0.638 & 1.392 & 2.930 & 5.412 \\  \Cline{0.2pt}{2-8}
 &      $ATS(\vL_{C)},\gamma)$ &  0.255 & 0.270 & 0.293 & 0.306 & 0.318 & 0.334 \\           
      \specialrule{1.5pt}{0pt}{0pt}   
      \end{tabular}

\caption{ Average computation time in seconds of  $ATS$ based on different hypothesis matrices with different dimensions.}
  \label{tab:Zeit4}
\end{table}
\subsection{Code for the compact root in R}
\begin{verbatim}
 MSrootcompact<- function(X){
  if(length(X) == 1){
    MSroot <- matrix(sqrt(X),1,1)
  }
  else{
    r=rankMatrix(X)[1]
    SVD <- svd(X)
    MSroot <-  sqrt(diag(SVD$d[1:r],r,r))%*%t((SVD$u)[,1:r])
  }
  return(MSroojat)
}   
\end{verbatim}
\end{document}